\newtheorem{definition}{Definition}
\newtheorem{theorem}{Theorem}
\newtheorem{lemma}{Lemma}
\newtheorem{proposition}{Proposition}
\newcommand{\ie}{{i.e.,}}
\def\1{{\bf 1}_N}
\def\0{{\bf 0}}
\title{ Allocating marketing resources over social networks: A long-term analysis\thanks{This work was partially supported by  INS2I CNRS under the 80'PRIME call.}}
\author{Vineeth S. Varma, Samson Lasaulce, Julien Mounthanyvong and Irinel-Constantin Mor\u{a}rescu\thanks{V. S. Varma and I-C. Mor\u{a}rescu are with the Universit\'e de Lorraine, CNRS, CRAN, F-54000 Nancy, France, {\small \tt constantin.morarescu@univ-lorraine.fr.}}
\thanks{S. Lasaulce and J. Mounthanyvong are with the Laboratoire des Signaux et Systemes (L2S, CNRS-CentraleSupelec-Univ. Paris Sud), Gif-sur-Yvette, France.} }% 
\begin{document}

\maketitle
\thispagestyle{empty}

\begin{abstract} In this paper, we consider a network of consumers who are under the combined influence of their neighbors and external influencing entities (the marketers). The consumers' opinion follows a hybrid dynamics whose opinion jumps are due to the marketing campaigns. By using the relevant static game model proposed recently in \cite{varma2018marketing}, we prove that although the marketers are in competition and therefore create tension in the network, the network reaches a consensus. Exploiting this key result, we propose a coopetition marketing strategy which combines the one-shot Nash equilibrium actions and a policy of no advertising. Under reasonable sufficient conditions, it is proved that the proposed coopetition strategy profile Pareto-dominates the one-shot Nash equilibrium strategy. This is a very encouraging result to tackle the much more challenging problem of designing Pareto-optimal and equilibrium strategies for the considered dynamical marketing game. 
\end{abstract}

\begin{IEEEkeywords}
Social networks, resource allocation    
\end{IEEEkeywords}

%
%This paper studies the marketing strategies of competing firms that want to capture a larger market by controlling the opinions of users in a social network. The marketing is performed through repeated campaigns. In this framework, we study the behavior of opinions when both firms repeatedly apply the Nash equilibrium actions of the one-shot game. Under certain conditions, we prove that such a strategy results in the opinions converging to a certain equilibrium. We then propose a coopetition plan which let the firms avoid losses by investing in campaigns while the opinions are in equilibrium. These results are illustrated through numerical simulations that demonstrate the utility gain obtained by implementing the proposed plan.

%1111==========================
%1111==========================
%1111==========================
\section{INTRODUCTION}
\label{sec:introduction}

\IEEEPARstart{I}{n} many domains such as in economics and politics, people (e.g., consumers or voters) are both influenced by their acquaintances, friends, or relatives and by external entities (e.g., marketers or candidates); these influencers are called in a generic manner \textit{marketers}. These external entities are currently better realizing the potential of acquiring and exploiting some knowledge about the corresponding dynamics of a digital social network to design good strategies. Targeted and viral marketing constitute good examples illustrating this tendency \cite{tuten2017social}. To provide a specific example, quite recently, some firms have been starting to remunerate popular bloggers or YouTubers to promote some goods in their videos. The main purpose of the present paper is precisely to study the evolution of people opinion when they are under the combined influence of their "neighbors" (who may have different degrees of influence) and marketers (who typically have diverging interests). Whereas opinion dynamics (OD) has been attracting a lot of attention from researchers, in the control community, in particular, the problem of controlling opinion dynamics has been left almost unexplored. Additionally, if one considers the problem in presence of multiple controllers instead of one, then only a couple of formal works seem to be available. 

Among relevant works on controlled OD, we find \cite{Camponigro, Dietrich} in which the authors look at the role of controlling (from a single controller) a small number of agents of the network to enforce consensus. We also find recent attempts to control the discrete-time dynamics of opinions such that as many agents as possible reach a certain set after a finite number of influence instances \cite{Hegselmann}. The classical literature on non-cooperative games between marketers assumes a homogeneous population of consumers \cite{friedman1958game, butters1977equilibrium,esmaeili2009game}. For the scenario which directly concerns the present work, namely the scenario that involves multiple controllers or marketers influencing consumers over social media, the closest works are given by  \cite{masucci2014strategic} and \cite{varma2018marketing}. In \cite{masucci2014strategic}, the authors consider multiple influential entities competing to control the opinion of consumers under a game-theoretical setting.  This work assumes an undirected graph and a (specific) voter model for OD resulting in strategies that are independent of the node centrality (\ie \ the agent influence power). On the other hand, in \cite{varma2018marketing}, the authors use the node centrality to define the agent influence power and show how the marketers can exploit this quantity to allocate their marketing budget over the agents, and therefore "optimize" their return of investment in terms of market share. The authors then use a static or one-shot game model and conduct the corresponding Nash equilibrium (NE) analysis. The obtained results clearly show the benefit of designing target marketing strategies by using the available knowledge about the graph of the network of agents.  However, this interesting analysis is incomplete as it is assumed that each marketer makes decisions independently from campaign to campaign. Moreover, when the marketers implement the derived one-shot NE strategies, one does not know about the long-term behavior of the marketers. Remarkably, a long-term analysis, as conducted in the present letter, reveals that the marketers may have an interest in stopping to invest and therefore influence the consumers and accept to operate at a network equilibrium point in terms of market shares. 

%XXX Paper structure. To be completed in a precise and efficient manner. XXX\\

{\bf Notation.} Let $\mathbb{R}_{\geq 0}:= [0,\infty)$ denote the set of non-negative real numbers. If $f(t)$ is a lower semi-continuous function at $t_0$, we use the notation $f(t^+_0)$ to imply $f(t^+_0):= \lim_{t \to t_0, t>t_0} f(t)$. Since we are concerned with a duopoly in this work, for ease of exposition, we will denote by $-i$ when $i \in \{1,2\}$ is a player index, to refer to the index of the other player, i.e. $-i := 1+i \mod 2$.
%We use $\Pr(\cdot)$ to denote the probability, and $E(\cdot)$ for the expectation of a random variable. 

%2222==========================
%2222==========================
%2222==========================
\section{PROPOSED PROBLEM FORMULATION}
\label{sec:pb-formulation}

We assume the presence of two marketers who want to capture agents (who will also be called consumers) over a common market. The set of consumers is denoted by $\mathcal{N}= \{1,2,\dots,N \}$; these consumers are connected through a social network. The opinion of Consumer $n \in \mathcal{N}$ at time $t  \geq 0$ is represented by the scalar $x_n(t) \in (0,1)$, with $x(t)\in \mathcal{X}_0$ and $\mathcal{X}_0:=(0,1)^N$. The vector $x(t) = (x_1(t),x_2(t),\dots,x_N(t))^\top$ is called the state of the network at time $t$. In the absence of the marketers, the opinions evolve based on a consensus model with Laplacian matrix $\mathbf{L}$ over a graph $(\mathcal{N},\mathcal{E})$. %as defined in \cite{varma2018marketing}. 
At given time instances which are referred to as marketing campaigns, consumers undergo the influence of the marketers. The set of marketing campaign instances is denoted by $\mathcal{T}= \{t_1,t_2,\dots,t_K\}$, $K$ being the number of campaigns; the set of campaign indices is denoted by $\mathcal{K}:=\{1,2,\dots,K\}$. The campaign duration for Stage $k  \in \{1,\dots,K-1\}$ is given by $T_k \in (0,t_{k+1}-t_k]$ and $T_K>0$ for Stage $K$. At each time instant $t_k \in \mathcal{T}$, Marketer $i \in \{1,2\}$ invests according to the (action) vector $a_i(k) = (a_{i,1}(k), \dots, a_{i,N}(k))^\top \in \mathcal{A}_i$ where the corresponding action space for Marketer $i$ is defined as: $\mathcal{A}_i = \left\{a_i \in \mathbb{R}^N: a_{i,n} \geq 0, \sum_{n=1}^N a_{i,n} \leq B_i \right\}$, $B_i > 0 $ being the available budget for Marketer $i$. As a result of the marketing campaigns, the OD for the consumers is assumed to obey the following hybrid model

\begin{equation}
\left\{\begin{array}{llll}
\dot{x}(t)&=& -\mathbf{L} x(t)& \forall t \in \mathbb{R} \setminus \mathcal{T} \\
x(t_k^+) &= &\Phi(x(t_k), a_{1}(k),a_{2}(k))& \forall t_k \in \mathcal{T} , k \in \mathcal{K}, 
\end{array}\right.
\label{eq:opdyn}
\end{equation}

where $\Phi(x(t_k), a_{1}(k),a_{2}(k))  = \left(\phi(x_1(t_k), a_{1,1}(k),a_{2,1}(k) ),..,   \phi(x_N(t_k), a_{1,N}(k),a_{2,N}(k) )   \right)^\top $ and

\begin{equation}
\phi(x_n(t_k), a_{1,n}(k),a_{2,n}(k) )    =\frac{x_{n}(t_k)+ a_{1,n}(k)}{1+a_{1,n}(k)+a_{2,n}(k)}. 
\end{equation}

The assumed jump model has been proposed in \cite{varma2018marketing} and it is justified therein by an axiomatic approach. In fact,  it can also be justified by other good arguments e.g., by probabilistic arguments \cite{martins2008continuous} or from an economic resource allocation point of view \cite{kelly1998rate}. The actions of the marketers are assumed to be driven by their utility function. The \textit{stage revenue or utility} for Marketer $i$ (that is, resulting from the current campaign) for Campaign $k \in \mathcal{K}$ is assumed to be given by the one-shot game model developed in \cite{varma2017opinion}  that is:  

\begin{equation}
\hspace{-2mm}\begin{array}{l}
u_1(x(t_k^+) , a_{1}(k), a_2(k)) = \rho(k)^\top x(t_k^+) - \lambda_1 \mathbf{1}_N^\top  a_{1}(k) \\
u_2(x(t_k^+), a_{1}(k), a_{2}(k)) = \rho(k)^\top ( \mathbf{1}_N - x(t_k^+) )  -\lambda_2  \mathbf{1}_N^\top a_{2}(k)  
\end{array}
\label{eq:uiv1}
\end{equation}

where $\mathbf{1}_N$ is the column vector of $N$ ones and $\lambda_i \geq 0$ is a parameter that represents the cost of advertising for Marketer $i$. Notice that the assumed utilities can be seen as the result from an averaging effect over the opinion. Indeed, in \cite[ Prop. 1]{varma2017opinion}, it is shown that one can write that $\int_{t_k}^{t_k+T_k} 1_N^\top x(t) \mathrm{d}t  = \rho(k)^\top x(t_k^+)$, where $\rho(k)$ depends on $T_k$ and $\mathbf{L}$. Now, if the utilities have to be related to the final opinion only, observe that  $1_N^\top x(t_k+T_k)= \rho(k)x(t_k^+)$ where $\rho(k)^\top = 1_N^\top \exp(-\mathbf{L}^\top T_k)$. These are two different situation justifying the form of utilities in \eqref{eq:uiv1} in which only the expression of $\rho$ changes. We will refer to the latter key quantity as the \textit{agent influence power} for Consumer $n$ over Campaign $k$.

In \cite{varma2018marketing}, the authors suggest that a possible strategy is that a marketer chooses, at each stage (or campaign), its (unique) NE action associated with the static game defined by $\left( \{1,2\}, \mathcal{A}_1 \times \mathcal{A}_1, \{u_i\}_{i \in \{1,2\}} \right)$. Here, to conduct a long-term analysis of the problem, we consider a setting which encompasses that model. Indeed, we consider long-term utilities which result from averaging stage utilities over the $K$ stages. To define these utilities, we first define marketing strategies. \textit{The marketing strategy} for Marketer $i$ is the sequence of functions denoted by $\sigma_i$ and defined by:

\begin{equation}
\begin{array}{ccccc}
\sigma_{i,k}  & : & \mathcal{H}_k & 
\to & \mathcal{A}_i\\
                     &    &  h(k) &  \mapsto & a_i(k)  
\end{array}
\end{equation}

where $\mathcal{H}_k   = ( \mathcal{X}_0 \times \mathcal{A}_1 \times \mathcal{A}_2)^{k-1} $ is the set of possible histories of the long-term game at stage $k$ and $h(k) =\left(x(t_1),a_1(1),a_2(1), \dots,   x(t_{k-1}),a_1(k-1),a_2(k-1) \right) $ is the history realization  at stage $k$. The \textit{long-term utility} or total net revenue for Marketer $i$ is then given by:

\begin{equation}
U_i(\sigma_1, \sigma_2) = \frac{1}{K} \sum_{k=1}^{K} u_i(x(t_k),a_{1}(k),a_2(k)).
\end{equation}

One of the goals of this paper is to design good marketing strategies whose performance are measured in terms of long-term utility. Notice that the problem under consideration is a hybrid dynamic game with causal closed-loop feedback and perfect monitoring. Both the characterization of equilibrium utilities and the determination of good strategies for such game models is known to be non-trivial. One big difference between the present letter and \cite{varma2018marketing} is as follows. For the one-shot game, expressing the best-responses is shown to be possible in the latter and thus, by intersection, the one-shot Nash equilibrium \textbf{actions} are obtained. This is not possible to do so when it comes to \textbf{strategies}, which are (possibly infinite) sequences of functions. In this paper, we make the first step into solving this problem by analyzing the long-term performance of the repeated NE strategy and by exhibiting a feasible strategy which outperforms the one-shot NE strategy.

\section{PERFORMANCE ANALYSIS OF THE ONE-SHOT GAME MARKETING STRATEGY}
\label{sec:perf-analysis}

From (\ref{eq:opdyn}), it is seen that the continuous-time component of the considered hybrid dynamical system corresponds to a consensus model. On the other hand, the jumps associated with the discrete-time part are a result of the choices made by the decision-makers (namely, the two marketers) who have non-aligned utilities. In fact, if the costs of advertising are zero, each stage game is strictly zero-sum.Therefore, in the presence of external influencers who have diverging interests and impact the dynamics with an infinite number of jumps, it is not clear whether the jumps will vanish and a consensus will emerge. Remarkably, it is possible to show that when the marketers choose the action corresponding to the NE of each stage game, the network state stabilizes to a value and this in the presence of marketers in interaction) whose expression is very simple and elegant. For this, we first make the following observations. Even if the  number of marketing campaign would be arbitrarily large, the marketers would not use all of it. This is because, by construction of the utility, strong influencing actions also involve a cost, which naturally regularizes the behavior of the marketer. In fact, it is possible to exhibit a budget threshold above which the marketers have an interest in using the extra budget. It turns out that assuming the available budget is above this threshold, it becomes possible to re-express the repeated one-shot NE strategy to prove the convergence of the network state as stated in the theorem provided further.
\begin{lemma} Let 

\begin{equation}
\overline{X}_n: = \left\{y \in \mathbb{R}:y> 1 - \frac{\lambda_1}{\rho_n} , y< \frac{\lambda_2}{\rho_n}  \right\} \label{eq:defbarX}
\end{equation}
and 

\begin{equation}
X^\dagger_n: = \left\{y \in \mathbb{R}: y> 1 -\frac{(1-\eta) \rho_n}{\lambda_1+\lambda_2}, y< \frac{\eta \rho_n}{\lambda_1+\lambda_2}  \right\},
\end{equation}

where 

\begin{equation}
\eta := \frac{\lambda_2}{\lambda_1+ \lambda_2}. \label{eq:eqxNE}
\end{equation}
Assume that the budget for each agent satisfies the following relation

\begin{equation}
B_i \geq \sum_{n\in \mathcal{N}} \max\left\{0,\sqrt{\frac{\rho_n}{\lambda_i}}-1,\frac{\rho_n}{\lambda_1+\lambda_2}-1\right\}. \label{eq:budsur}
\end{equation}

Then, for each $n \in \mathcal{V}$, the stage game NE expresses as follows.
\begin{enumerate}
\item When $\frac{\rho_n}{\lambda_1+\lambda_2} \leq 1 $:
\begin{equation}\begin{array}{l}
(a_{1,n}^\star,a_{2,n}^\star) =\\
 \left\{ \begin{array}{ll}
(0,0) & \text{if }x_n(t_k) \in \overline{X}_n \\
(0,\sqrt{\frac{\rho_n}{\lambda_2} x_n(t_k)} - 1) & \text{if }x_n(t_k) \geq \max \overline{X}_n \\
(\sqrt{\frac{\rho_n}{\lambda_1} (1-x_n(t_k)}) - 1,0) & \text{if }x_n(t_k) \leq \min \overline{X}_n.
\end{array} \right.
\end{array}
\label{eq:OSNEB}
\end{equation}

\item When $\frac{\rho_n}{\lambda_1+\lambda_2} > 1 $:

\begin{equation}\begin{array}{l}
a_{1,n}^\star = \\
\left\{ \begin{array}{ll}
\frac{\rho_n \eta }{\lambda_2+ \lambda_{1}}  - x_{n}(t_k) & \text{if }x_n(t_k) \in X_n^\dagger\\
0 & \text{if }x_n(t_k) \geq \sup X_n^\dagger \\
\sqrt{\frac{\rho_n}{\lambda_1} (1-x_n(t_k)}) - 1 & \text{if }x_n(t_k) \leq \inf X_n^\dagger
\end{array} \right. \end{array}
\label{eq:OSNE1}
\end{equation}

and

\begin{equation}\begin{array}{l}
a_{2,n}^\star = \\
 = \left\{ \begin{array}{ll}
\frac{\rho_n (1-\eta) }{\lambda_2+ \lambda_{1}}  - (1-x_{n}(t_k)) & \text{if }x_n(t_k) \in X_n^\dagger\\
0 & \text{if }x_n(t_k) \leq \inf X_n^\dagger \\
\sqrt{\frac{\rho_n}{\lambda_2} x_n(t_k)} - 1)& \text{if }x_n(t_k) \geq \sup X_n^\dagger.
\end{array} \right. \end{array}\label{eq:OSNE2}
\end{equation}

\end{enumerate} \label{prop:OSNE}
\end{lemma}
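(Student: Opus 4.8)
The plan is to reduce the $N$-dimensional, budget-constrained stage game to $N$ independent scalar games, one per consumer, to solve each by intersecting the two best responses, and then to use \eqref{eq:budsur} to verify that the resulting profile respects the budgets, so that it is indeed the stage-game Nash equilibrium. First I would observe that $\phi$ acts componentwise, so the utilities in \eqref{eq:uiv1} split into sums of per-consumer terms, each depending on Marketer $i$'s investment in that single consumer; hence, as long as the budget constraints are inactive, the game decouples into scalar games $G_n$ with payoffs $v_{1,n}(a_1;a_2)=\rho_n\frac{x_n+a_1}{1+a_1+a_2}-\lambda_1 a_1$ and $v_{2,n}(a_2;a_1)=\rho_n\frac{1-x_n+a_2}{1+a_1+a_2}-\lambda_2 a_2$ on $a_1,a_2\ge0$, where I abbreviate $x_n:=x_n(t_k)$. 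Rewriting $v_{1,n}=\rho_n-\rho_n\frac{1-x_n+a_2}{1+a_1+a_2}-\lambda_1 a_1$ and using $1-x_n+a_2>0$ (as $x_n\in(0,1)$), the map $a_1\mapsto v_{1,n}$ is strictly concave, and symmetrically for $v_{2,n}$; thus each best response is single-valued and equals either the interior stationary point or the corner $0$.

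Next I would write down the stationarity conditions $\partial_{a_1}v_{1,n}=0\iff(1+a_1+a_2)^2=\rho_n(1-x_n+a_2)/\lambda_1$ and $\partial_{a_2}v_{2,n}=0\iff(1+a_1+a_2)^2=\rho_n(x_n+a_1)/\lambda_2$, giving $\mathrm{BR}_1(a_2)=(\sqrt{\rho_n(1-x_n+a_2)/\lambda_1}-1-a_2)^{+}$ and $\mathrm{BR}_2(a_1)=(\sqrt{\rho_n(x_n+a_1)/\lambda_2}-1-a_1)^{+}$, and then intersect them by a case analysis on which components vanish. If $a_1^\star=a_2^\star=0$, the corner conditions $\partial_{a_1}v_{1,n}|_{(0,0)}\le0$, $\partial_{a_2}v_{2,n}|_{(0,0)}\le0$ read $\rho_n(1-x_n)\le\lambda_1$, $\rho_n x_n\le\lambda_2$, i.e. $x_n\in\overline{X}_n$. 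If both actions are interior, adding the two stationarity equations gives $(1+a_1^\star+a_2^\star)^2(\lambda_1+\lambda_2)=\rho_n(1+a_1^\star+a_2^\star)$, hence $1+a_1^\star+a_2^\star=\rho_n/(\lambda_1+\lambda_2)$, and back-substitution yields $a_1^\star=\rho_n\eta/(\lambda_1+\lambda_2)-x_n$, $a_2^\star=\rho_n(1-\eta)/(\lambda_1+\lambda_2)-(1-x_n)$, both positive exactly when $x_n\in X^\dagger_n$. If exactly one vanishes, say $a_1^\star=0<a_2^\star$, then $a_2^\star=\mathrm{BR}_2(0)=\sqrt{\rho_n x_n/\lambda_2}-1$ and the surviving corner condition $\partial_{a_1}v_{1,n}|_{(0,a_2^\star)}\le0$, after inserting $(1+a_2^\star)^2=\rho_n x_n/\lambda_2$, reduces to $x_n\ge\lambda_2\rho_n/(\lambda_1+\lambda_2)^2$; the mirror case gives $a_2^\star=0$, $a_1^\star=\sqrt{\rho_n(1-x_n)/\lambda_1}-1$, $x_n\le1-\lambda_1\rho_n/(\lambda_1+\lambda_2)^2$.

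It then remains to stitch the pieces. One checks $\overline{X}_n\neq\emptyset\iff\rho_n/(\lambda_1+\lambda_2)\le1$ and $X^\dagger_n\neq\emptyset\iff\rho_n/(\lambda_1+\lambda_2)>1$, which is the origin of the two regimes. In the regime $\rho_n/(\lambda_1+\lambda_2)\le1$ one has $\lambda_2\rho_n/(\lambda_1+\lambda_2)^2\le\lambda_2/\rho_n=\max\overline{X}_n$ (equivalent to $\rho_n\le\lambda_1+\lambda_2$), and symmetrically at the lower end, so the regions $x_n\in\overline{X}_n$, $x_n\ge\max\overline{X}_n$, $x_n\le\min\overline{X}_n$ cover $(0,1)$ and the formulas above reproduce \eqref{eq:OSNEB}; in the regime $\rho_n/(\lambda_1+\lambda_2)>1$ the thresholds $\lambda_2\rho_n/(\lambda_1+\lambda_2)^2$ and $1-\lambda_1\rho_n/(\lambda_1+\lambda_2)^2$ equal $\sup X^\dagger_n$ and $\inf X^\dagger_n$, the analogous regions cover $(0,1)$, and the formulas reproduce \eqref{eq:OSNE1}--\eqref{eq:OSNE2}, with agreement at the shared endpoints. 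For feasibility, reading off the formulas shows that in every case $a_{i,n}^\star\le\max\{0,\sqrt{\rho_n/\lambda_i}-1,\rho_n/(\lambda_1+\lambda_2)-1\}$ (the three quantities being the suprema of $a_{i,n}^\star$ over the three regions); summing over $n$ and invoking \eqref{eq:budsur} gives $\sum_n a_{i,n}^\star\le B_i$, so $a_i^\star\in\mathcal{A}_i$. Since each $a_{i,n}^\star$ maximizes $v_{i,n}$ over all of $\mathbb{R}_{\geq 0}$ and the utilities are separable, $a_i^\star$ is a best response to $a_{-i}^\star$ in the full constrained game; hence $(a_1^\star,a_2^\star)$ is a stage-game Nash equilibrium, and it is \emph{the} equilibrium by the uniqueness proved in \cite{varma2018marketing}.

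The routine parts are the decoupling-plus-concavity step, the best-response computation, and the feasibility check. The main obstacle is the case analysis: tracking the four on/off configurations and, above all, checking that the thresholds generated by the corner (no-deviation) conditions line up exactly with $\max\overline{X}_n$, $\min\overline{X}_n$, $\sup X^\dagger_n$ and $\inf X^\dagger_n$, so that in each regime the three regions genuinely partition $(0,1)$ and the piecewise expressions agree where the regions meet.
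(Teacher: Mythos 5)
Your proof is correct and follows essentially the same route as the paper: solve the budget-free per-agent game through the two best responses, split the analysis according to whether $\overline{X}_n$ or $X^\dagger_n$ is nonempty, and then invoke \eqref{eq:budsur} to check that the resulting actions are budget-feasible, hence an NE of the constrained game. The only difference is cosmetic: you re-derive the best responses and corner conditions from concavity and stationarity, whereas the paper imports the best-response formula and the NE characterization directly from \cite{varma2018marketing} with the Lagrange multipliers set to zero.
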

\begin{proof}
From \cite{varma2018marketing}, we know that the best response for player $i$ to action $a_{-i}$ by the other player is given by
\begin{equation}
\mathrm{BR}_i = \max\left\{0,\sqrt{\frac{\rho_n (a_{0,n;-i} + a_{-i,n} )}{\mu_{i}+ \lambda_i} } - 1 -a_{-i,n} \right\} \label{eq:br}
\end{equation}

where $\mu_{i}$ is a common constant for all $n \in \mathcal{V}$, and is a

 Lagrange multiplier which ensures that the budget constraint $\sum_{n=1}^N a_{i,n} \leq B_i$ is satisfied. Proposition 2 in \cite{varma2018marketing} shows that for each $n \in \mathcal{V}$, the NE $(a_{1,n}^\star,a_{2,n}^\star)$ is given by
\begin{itemize}
\item $(y,0)$ (or $(0,y)$) if $\exists y \in [0,\infty]$ such that \eqref{eq:br} is satisfied by one of these pairs,
\item or $(a_{1,n}^\star,a_{2,n}^\star) \in (0,\infty) \times (0,\infty)$ and is given by

\begin{equation}
a_{i,n}^\star = \left(\frac{ k_i}{k_i +k_{-i}}\right)^2 k_{-i}\rho_n - a_{0,n;i}, \label{eq:NEspe}
\end{equation}

where $k_i= \frac{1}{\lambda_i+\mu_{i}}$ and $\mu_{i}$ is a common constant for all $n \in \mathcal{V}$ such that $\sum_{n=1}^N a_{i,n} \leq B_i$.
\end{itemize}

In the absence of budget constraints, we are able to set $\mu_{i}=0$ for \eqref{eq:br} and for Case 2 of Lemma 1. We then show that the resulting actions respect the budget constraint when $B_i$ satisfies \eqref{eq:budsur}. $\sum_{n=1}^N a_{i,n} \leq B_i$. Therefore \eqref{eq:NEspe} yields

\begin{equation}
a_{i,n}^\star = \left(\frac{1}{\lambda_i +\lambda_{-i}}\right)^2 \lambda_{-i}\rho_n - a_{0,n;i}. \label{eq:NEspe2}
\end{equation}

Note that $\frac{\rho_n}{\lambda_1+\lambda_2} \leq 1 \Rightarrow  \bar{X}_n \neq \emptyset$ and $X^\dagger_n = \emptyset$ which corresponds to Case 1 of Lemma 1; while $\frac{\rho_n}{\lambda_1+\lambda_2} > 1 \Rightarrow \bar{X}_n = \emptyset$ and $X^\dagger_n \neq \emptyset$ corresponding to Case 2 of Lemma 1. In case $1$, we find that \eqref{eq:NEspe} will never have positive actions for both players simultaneously for any $x_n(t_k) \in [0,1]$ (note that  $a_{0,n;1}=x_n(t_k)$ and $a_{0,n;2}=1-x_n(t_k)$). Therefore the only possible solutions are as given in Case 1 of Lemma 1 by looking at \eqref{eq:br} with one action set to $0$. On the other hand, Case 2 of Lemma 1 is possible when $x_n(t_k)$ belongs to the open interval $X^\dagger_n$. Outside this interval, we take Case 1 of Lemma 1 again to get the final results. The largest action that can be taken under case 2 is bounded by 
\[ \sup \left\{  \frac{\rho_n \eta}{\lambda_1+\lambda_2} - x_n: x_n \in X^\dagger_n \right\} \]
which is less than $\frac{\rho_n }{\lambda_1+\lambda_2}-1$ since $\eta<1$ and $\inf X^\dagger_n= 1 -\frac{(1-\eta) \rho_n}{\lambda_1+\lambda_2} $. Under Case 1, the maximum is simply $\sqrt{\frac{\rho_n \eta}{\lambda_i} } -1$. Applying the same logic for all agents we get that the total budget is always less than $B_i$ if $B_i$ respects \eqref{eq:budsur}.
\end{proof}
  
We will refer to the strategy associated with playing the one-shot NE at every stage as $\sigma_i^{\star}$. Exploiting the above result, the following theorem can be proven.

\begin{theorem} Let $\rho_{\max} := \min_{k \in \mathcal{K}} \max_{n \in \mathcal{N}} \rho_n(k)$. Assume Marketer $i$, $i \in \{1,2\}$, implements the marketing strategy $\sigma_i^{\star}$. Assume the graph associated with the matrix $\mathbf{L}$ to be strongly connected. Then the dynamical system \eqref{eq:opdyn} has at least one (network) equilibrium $x^*$ which verifies the following:
\begin{itemize}
\item If $\frac{  \rho_{\max}  }{\lambda_1+\lambda_2} >1$, then $x^* = \eta \mathbf{1}_N$ is the unique network equilibrium.
\item If $\frac{  \rho_{\max}  }{\lambda_1+\lambda_2} \leq 1$, then any $x^* = \gamma 1_N $ is a network equilibrium, with $\gamma \in \overline{X}_{\max}$, where $\overline{X}_{\max}$ is defined by replacing $\rho_n$ with $\rho_{\max}$ in \eqref{eq:defbarX}. 
\end{itemize} 
\end{theorem}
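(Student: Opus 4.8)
The plan is to convert ``network equilibrium'' into an algebraic fixed-point condition and then read the answer off Lemma~\ref{prop:OSNE}. First I would show that every network equilibrium is a consensus state: since an equilibrium of \eqref{eq:opdyn} must be invariant under the continuous dynamics, it satisfies $\mathbf{L}x^*=0$; strong connectivity of the graph of $\mathbf{L}$ makes the zero eigenvalue simple, so $\ker\mathbf{L}=\mathrm{span}\{\mathbf{1}_N\}$ and $x^*=\gamma\mathbf{1}_N$ for some $\gamma\in(0,1)$. This is the only use of the connectivity hypothesis. It then remains to impose invariance under the jump of every campaign: writing $\rho_n$ for $\rho_n(k)$, the state $\gamma\mathbf{1}_N$ is a network equilibrium iff $\phi\big(\gamma,a^\star_{1,n}(k),a^\star_{2,n}(k)\big)=\gamma$ for all $n$ and all $k$, where $(a^\star_{1,n}(k),a^\star_{2,n}(k))$ is the stage-$k$ Nash action at state $\gamma\mathbf{1}_N$ given by Lemma~\ref{prop:OSNE} with $x_n(t_k)=\gamma$.

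The core step is to substitute each branch of \eqref{eq:OSNEB}--\eqref{eq:OSNE2} into $\phi$. On the no-advertising branch (Case~1 with $\gamma\in\overline{X}_n$) we have $a^\star_{1,n}(k)=a^\star_{2,n}(k)=0$ and $\phi(\gamma,0,0)=\gamma$, so $\gamma$ is invariant. On the interior branch of Case~2 ($\gamma\in X^\dagger_n$), inserting $a^\star_{1,n}(k)=\frac{\rho_n\eta}{\lambda_1+\lambda_2}-\gamma$ and $a^\star_{2,n}(k)=\frac{\rho_n(1-\eta)}{\lambda_1+\lambda_2}-(1-\gamma)$ makes the numerator of $\phi$ collapse to $\frac{\rho_n\eta}{\lambda_1+\lambda_2}$ and the denominator to $\frac{\rho_n}{\lambda_1+\lambda_2}$, so $\phi=\eta$ no matter what $\gamma$ is. On each remaining one-sided branch the nonzero action is strictly positive except exactly at the pertinent boundary point, and $\phi$ then strictly moves $\gamma$ toward $0$ or $1$; in particular, in Case~2 the action stays strictly positive even at $\inf X^\dagger_n$ and $\sup X^\dagger_n$. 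Hence a consensus value $\gamma$ is fixed by the stage-$k$ jump if and only if either $\gamma=\eta$, or every node satisfies $\rho_n<\lambda_1+\lambda_2$ and $\gamma\in\overline{X}_n(k)$.

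Finally I would assemble the pieces. Assuming $\lambda_1,\lambda_2>0$ (so $\eta\in(0,1)$), $\eta\mathbf{1}_N$ is always a network equilibrium: $\eta\in X^\dagger_n(k)$ whenever $\rho_n>\lambda_1+\lambda_2$ and $\eta\in\overline{X}_n(k)$ whenever $\rho_n<\lambda_1+\lambda_2$ (in both cases the two defining inequalities reduce to the case hypothesis), and the borderline $\rho_n=\lambda_1+\lambda_2$ is settled directly from the best response \eqref{eq:br}, which returns $(0,0)$ at $\gamma=\eta$; thus $\eta\mathbf{1}_N$ is fixed by the flow and by every jump, giving the ``at least one'' claim. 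If $\frac{\rho_{\max}}{\lambda_1+\lambda_2}>1$, then by definition of $\rho_{\max}$ some campaign has a node with $\rho_n>\lambda_1+\lambda_2$, and by the previous paragraph the only consensus value that this node's jump can fix is $\eta$; together with the existence of $\eta\mathbf{1}_N$ this gives uniqueness. If $\frac{\rho_{\max}}{\lambda_1+\lambda_2}\le1$, then $\overline{X}_{\max}\subseteq\overline{X}_n(k)$ for every $n,k$ (each $\overline{X}_n(k)$ is nonincreasing in $\rho_n$ and $\rho_n(k)\le\rho_{\max}$), so for any $\gamma\in\overline{X}_{\max}$ all stage-game Nash actions at $\gamma\mathbf{1}_N$ vanish and $\gamma\mathbf{1}_N$ is invariant under the flow and every jump; that is, it is a network equilibrium.

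The step I expect to be the main obstacle is the second one: the branch-by-branch check that, apart from $\gamma=\eta$, the only jump-invariant consensus values are those in the no-advertising region — in particular excluding ``accidental'' fixed points on the one-sided advertising branches — together with the bookkeeping that aggregates the per-node, per-campaign conditions into the single threshold $\rho_{\max}$ and the single interval $\overline{X}_{\max}$.
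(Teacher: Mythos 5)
Your proof is correct for the statement as written, and the branch-by-branch analysis of the jump map (no-advertising branch fixes $\gamma$, the interior branch of Case~2 collapses to $\eta$, the one-sided branches strictly push $\gamma$ away, with the borderline $\rho_n=\lambda_1+\lambda_2$ settled from \eqref{eq:br}) matches the computations the paper does with \eqref{eq:OSNEB}--\eqref{eq:OSNE2}. Where you genuinely diverge is in how uniqueness is obtained. You argue algebraically: flow-invariance forces $\mathbf{L}x^*=0$, strong connectivity makes the zero eigenvalue simple so $x^*=\gamma\mathbf{1}_N$, and then the jump fixed-point characterization pins down $\gamma=\eta$. The paper instead proves a contraction property, $\|x(t_{k+1})-\eta\mathbf{1}_N\|_\infty<\|x(t_k)-\eta\mathbf{1}_N\|_\infty$ whenever $x(t_k)\neq\eta\mathbf{1}_N$, by bounding $x_n(t_k^+)$ on each branch and using that $\exp(-\mathbf{L}T_k)$ strictly shrinks the deviation of any non-consensus vector (this is where it uses strong connectivity), and deduces uniqueness from global convergence. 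Your route is more elementary, needs no norm estimates, and is arguably cleaner for the literal existence/uniqueness claim; but it delivers strictly less than the paper's proof, since it says nothing about attractivity of $\eta\mathbf{1}_N$, and it is precisely this monotone-decrease property that the remark following the theorem ("unique asymptotically stable equilibrium") and the proof of Proposition~1 ("from Theorem 1, $\|x(t_k)-\eta\mathbf{1}_N\|_\infty$ is strictly decreasing") invoke. So if this theorem is meant to carry that dynamical content, you would still need to add the contraction step. One small caveat, which your write-up shares with the paper: in the case $\rho_{\max}\le\lambda_1+\lambda_2$ your parenthetical justification $\rho_n(k)\le\rho_{\max}$ is only guaranteed at a campaign attaining the minimum in the definition $\rho_{\max}=\min_k\max_n\rho_n(k)$, not for every $k$, so the inclusion $\overline{X}_{\max}\subseteq\overline{X}_n(k)$ for all $n,k$ needs either constant $\rho$ across campaigns or a restriction to that minimizing campaign.
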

\begin{proof} Our proof is structured as follows.  First, we show that if $\frac{\rho_{\max}}{\lambda_1+\lambda_2} \leq 1$ any $x^* \in \bar{X}_{\max}$ is a network equilibrium. Next, we show that if $\frac{\rho_{\max}}{\lambda_1+\lambda_2} > 1$, $||x(t_{k+1})- \eta \mathbf{1}_N||_\infty < ||x(t_k)- \eta \mathbf{1}_N||_\infty$ for all $x(t_k) \neq \eta \mathbf{1}_N$ implying convergence to $\eta\mathbf{1}_N$ which is the unique equilibrium.\\
Since the flow dynamics of \eqref{eq:opdyn} are basically consensus type dynamics, we know that $\gamma \mathbf{1}_N $ is a network equilibrium for any $\gamma \in \mathbb{R}$ for the part $\dot{x}=-\mathbf{L}x$.  If $\gamma \mathbf{1}_N=\Phi(\gamma \mathbf{1}_N,a_1^\star,a_2^\star)$, then we know that $\gamma \mathbf{1}_N $ is a network equilibrium. 

\textbf{Case 1: } When $\frac{\rho_{\max}}{\lambda_1+\lambda_2} > 1$: This implies that for any stage $k$, we have at least one $m$ such that $\frac{\rho_m}{\lambda_1+\lambda_2} >1$. From Lemma \ref{prop:OSNE}, we know that this implies $X^\dagger_n \neq \emptyset$. Therefore, the actions $a_{i,m}(k)$ are never simultaneously $0$. If $x_m(t_k) \in X^\dagger_m$, we have

\begin{equation}\begin{array}{ll}
x_m(t_k^+) &= \frac{x_m(t_k) + \frac{\rho_m \eta}{\lambda_2 + \lambda_1} -  x_n(t_k)  } {1 + \frac{\rho_m \eta}{\lambda_2 + \lambda_1} -  x_n(t_k) + \frac{\rho_m (1-\eta)}{\lambda_2 + \lambda_1} - (1- x_m(t_k))  }  \\[1mm]
&= \frac{ \rho_m   \eta}{ \rho_m \eta +  \rho_m(1- \eta)} = \eta
\end{array} \label{eq:etastable}
\end{equation}

using \eqref{eq:OSNE1} and \eqref{eq:OSNE2}. If $x_m(t_k) \notin X^\dagger_m$, we have exactly one of the players' actions non-zero which means that it is not a network equilibrium.

For any $n$ such that $\frac{\rho_{n}}{\lambda_1+\lambda_2} \leq 1$, we know that $a_{1,n}=a_{2,n}=0$ when $x_n(t_k) \in \overline{X}_n$ from \eqref{eq:OSNEB}. However, we can easily show that $\eta \in \overline{X}_n$. First, we see  

$$\max  \overline{X}_n= \frac{\lambda_2}{\rho_n}= \frac{\eta (\lambda_1 + \lambda_2)}{\rho_n} > \eta.$$

Similar arguments can be used to show that $\eta  \geq \min \overline{X}_n$. This implies that $x_n(t_k^+)=\eta$ if $x_n(t_k)=\eta$ for any $n$ and $k$. Therefore, the only value of $\gamma$ such that $\gamma \mathbf{1}_N=\Phi(\gamma \mathbf{1}_N,a_1^\star,a_2^\star)$ is when $\gamma=\eta$.

\textbf{Convergence of the hybrid dynamics:} Rewriting the flow dynamics, we have 

\begin{equation}
\begin{array}{l}
||x(t_{k+1})- \eta \mathbf{1}_N||_\infty  
=|| \exp(-\mathbf{L} T_k) [ x(t_{k}^+)- \eta \mathbf{1}_N ]||_\infty.\\
\end{array}
\end{equation}

When $x_m(t_k) \leq \inf X_m^\dagger <  \eta$, $a_{2,m}^\star =0$ and we use \eqref{eq:OSNE2} and \eqref{eq:OSNE1} to get

\begin{equation}
x_m(t_k^+) < 1 - \sqrt{(1-\eta) \frac{\lambda_1}{\lambda_1+\lambda_2}  } < \eta
\end{equation}

Thus, $|x_m(t_k^+) - \eta| < |x_m(t_k) - \eta| $ when $x_m(t_k) \notin X^\dagger_n$. 

On the other hand, for $n$ such that $\frac{\rho_{n}}{\lambda_1+\lambda_2} \leq 1$ we have the following. If $x_n(t_k) \leq \min \overline{X}_n$, $a_{2,n}^\star =0$ and we can use \eqref{eq:OSNE1} to solve for 

\begin{equation}\begin{array}{ll}
x_n(t_k^+) &= \frac{x_n(t_k) + \sqrt{\frac{\rho_n x_n(t_k)}{\lambda_1}} -1 } {1 + \sqrt{\frac{\rho_n x_n(t_k)}{\lambda_1} }-1  }  \\
&= 1 - \sqrt{ \frac{\lambda_1 (1- x_n(t_k)) }{\rho_n} } \leq 1 - \sqrt{ \frac{(1-\eta) \lambda_1}{\rho_n}}\\
\end{array} \label{eq:ineqxplus}
\end{equation}

However, $\frac{\lambda_1}{\rho_n} \geq 1- \eta$ and therefore $x_n(t_k^+) \leq \eta$. By similar calculations we can show that $x_n(t_k^+) \leq \eta$ when $x_n(t_k) \geq \max \overline{X}_n $. This implies that $|x_n(t_k^+) - \eta| < |x_n(t_k) - \eta| $ when $x_n(t_k) \notin X^\dagger_n$. 

Coupled with \eqref{eq:etastable}, we have shown that $|x_m(t_{k}^+)- \eta| < |x_m(t_{k})- \eta|$ unless $x_m(t_m)=\eta$. Therefore, unless $x_n(t_k)=\eta$ for all $n$, we have

\begin{equation}\begin{array}{r}
|| \exp(-\mathbf{L} T_k) [ x(t_{k}^+)- \eta \mathbf{1}_N ]||_\infty \leq  ||x(t_{k}^+)- \eta \mathbf{1}_N|^2 ||_\infty \\
< || x(t_{k})- \eta \mathbf{1}_N||_\infty
\end{array}
\end{equation}

On the other hand, if $x_m(t_m)=\eta$, either $x(t_{k})=\eta \mathbf{1}_N$ or we have at least one $n$ such that $x_n(t_k) \neq \eta$. In the first case, the network is already at equilibrium and in the other case, we have

\begin{equation}\begin{array}{r}
|| \exp(-\mathbf{L} T_k) [ x(t_{k}^+)- \eta \mathbf{1}_N ]||_\infty < ||x(t_{k}^+)- \eta \mathbf{1}_N|^2 ||_\infty \\
\end{array}
\end{equation}

as $\exp(-\mathbf{L} T_k)$ will reduce the norm for all vectors unless it is of the form $\gamma \mathbf{1}_N$. This concludes the proof of convergence.

\textbf{Case 2: } When $\frac{\rho_{\max}}{\lambda_1+\lambda_2} \leq 1$, this implies that $X^\dagger_n = \emptyset$ for all $n$ for some $k$ by definition of $\rho_{\max}$. Additionally, we have $\bar{X}_{\max} \subset \bar{X}_n$ for all $n \in \mathcal{N} $. Therefore if $x_n(t_k) \in \bar{X}_{\max}$, we have $a_{2,n}^\star (k)=a_{1,n}^\star (k)=0,\ \forall n$. This means that any $\gamma \textbf{1}_N$, $\gamma \in \overline{X}_{\max}$ is an equilibrium for the dynamics \eqref{eq:opdyn}.
\end{proof}

This theorem implies that if $\rho_{\max} > \lambda_1+ \lambda_2$, repeatedly applying the strategy $\sigma^\star$ will result in the dynamics \eqref{eq:opdyn} having a unique asymptotically stable equilibrium. The next section provides a coopetition strategy which exploits this behavior to improve the long-term utilities of both marketers simultaneously.

%4444==========================
%4444==========================
%4444==========================
\section{Proposed coopetition strategy}
\label{sec:new-strategy}

Here, we use the notion of \emph{coopetition} to indicate that although the marketers compete for the market of consumers, they may have an interest in cooperating to a certain degree. And the effect is that they both may have a better long-term utility. We will refer to the underlying feature for a coopetition strategy profile candidate as sustainability. Sustainability is defined with respect to the performance obtained when Marketer $i$, $i\in \{1,2\}$, uses the strategy  $\sigma_i^{\star}$. A coopetition strategy (CS) profile is thus said to be sustainable if it Pareto-dominates the strategy profile associated with the one-shot game Nash equilibrium actions. The main purpose of this section is to propose a possible coopetition plan and prove that it is sustainable under reasonable sufficient conditions (which are met in the typical numerical setting of Sec. \ref{sec:num}). 

\begin{definition}[Sustainability] The coopetition strategy profile $(\sigma_1^{\text{CS}}, \sigma_2^{\text{CS}})$ is said to be sustainable if

\begin{equation}
\forall i \in \{1,2\}, \ U_i(  \sigma_1^{\text{CS}}, \sigma_2^{\text{CS}} ) \geq U_i(\sigma_1^{\star}, \sigma_2^{\star}). \label{eq:sust}
\end{equation} 

\end{definition}
Mathematically, the notion of sustainability corresponds to the notion of Pareto-dominance applied to two points of interest. Here, we use the more precise term sustainable to indicate that the players would accept to implement a given coopetition plan if they obtain a better utility than by using a purely competitive strategy. The proposed coopetition plan comprises two phases, the first phase is composed of all the stages $k \in \{1,2,\dots,K_1\}$ and the second phase lasts for the remaining duration, i.e. $k \in \{K_1+1,\dots,K\}$. During the first phase, both marketers repeatedly play the one-shot NE. Then, the players switch to a non-aggressive operating point such that no marketing is performed. This is held for the duration of the second phase. The rationale behind the proposed plan is that, since we have proved that the market shares stabilize according to the ratio $\eta$, the return of investment of advertising becomes negligible. The proposed coopetition plan implies that for all $i \in \{1,2\}$, $a_i(k)=a_i^\star(k)$  for all $k \in \{1,2,\dots,K_1\}$ and $a_i(k)=\mathbf{0}$ for all $k \in \{K_1+1,\dots,K\}$. Here $a_i^\star(k)$ is the action at the NE of the one-shot game as given by Lemma \ref{prop:OSNE}. The proposed coopetition plan is sustainable if both players improve their utility with respect to repeatedly playing the NE of the one-shot game. 

Here $x(t_k)$ evolves according to \eqref{eq:opdyn} with $a_i(k)=a_i^\star(k)$ for all $k$, and $\tilde{x}(t_k)$ evolves with $a_i(k)=a_i^\star(k)$ for all $k \leq K_1$ and $a_i(k)=\mathbf{0}$ for all $k > K_1$. If we have that $x_n(t_k)$ converges to some point by $K_1$ stages, such that $x(t_{k}^+)=x(t_k)$ for all $k >K_1$, this implies that the utilities for both players unilaterally improve by playing action $\mathbf{0}$. However, this condition is very conservative as it requires all the agents to converge to some opinion within a finite time. The following proposition gives a more relaxed condition for checking the feasibility of the proposed CS. When $\rho_{\max} \leq \lambda_1 +  \lambda_2$, the actions of the marketers if $x_n(t_k) \in \bar{X}_{\max}$ for all $n$ are going to be $0$ and multiple equilibria may be reached. This implies that even the strategy $\sigma^\star$ results in a non-aggressive behavior. The following proposition provides a condition under which the proposed CS is sustainable for the other case. 

\begin{proposition}
When $\rho_{\max}>\lambda_1 +  \lambda_2$, the sustainability condition \eqref{eq:sust} of the proposed CS is satisfied if $\exists \delta \in [0,1)$ such that:
\begin{itemize}
\item $x_n(t_{K_1+1}) \in [\eta-\delta,\eta+\delta]$ for all $n \in \mathcal{V}$,
\item $\delta <  \min\{ \eta,1-\eta\} \left( \frac{ \rho_{\max}}{\lambda_1+ \lambda_2}  -1\right) $,
\item \begin{equation}
\delta \leq \lambda_i \frac{ \rho_{\max} \lambda_i }{2 \rho(k)^T \mathbf{1}_N (\lambda_1+\lambda_2)^2} \left(1+ \frac{\lambda_i}{2 \rho(k)^T \mathbf{1}_N } \right)^{-1}    ,
\end{equation}
for all $k \in \{K_1+1,\dots,K\}$.
\end{itemize}
\end{proposition}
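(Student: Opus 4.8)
The plan is to exploit that, by construction, $x(t_\ell)=\tilde x(t_\ell)$ for $\ell\le K_1+1$ and the two action profiles agree over the first $K_1$ campaigns, so that $K\big(U_i(\sigma_1^{\mathrm{CS}},\sigma_2^{\mathrm{CS}})-U_i(\sigma_1^\star,\sigma_2^\star)\big)=\sum_{k=K_1+1}^{K}\big(u_i^{\mathrm{CS}}(k)-u_i^\star(k)\big)$, and then to prove each phase-two summand is non-negative. In phase two the CS actions are $\mathbf 0$, hence the jump in \eqref{eq:opdyn} is the identity ($\phi(x_n,0,0)=x_n$), $\tilde x$ merely obeys the consensus flow $\dot{\tilde x}=-\mathbf L\tilde x$, no advertising cost is incurred, and from \eqref{eq:uiv1} one has $u_1^{\mathrm{CS}}(k)=\rho(k)^\top\tilde x(t_k)$ (and $\rho(k)^\top(\mathbf 1_N-\tilde x(t_k))$ for Marketer~$2$), whereas $u_1^\star(k)=\rho(k)^\top x(t_k^+)-\lambda_1\mathbf 1_N^\top a_1^\star(k)$ with $x(t_k^+)=\Phi(x(t_k),a_1^\star(k),a_2^\star(k))$.

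The first substantive step is to show that both trajectories stay in the cube $\mathcal B:=\{y:\|y-\eta\mathbf 1_N\|_\infty\le\delta\}$ for all $k\ge K_1+1$. Because $-\mathbf L$ is Metzler with zero row sums, $\exp(-\mathbf L s)$ is row-stochastic, hence non-expansive in $\|\cdot\|_\infty$ about $\eta\mathbf 1_N$; so $\mathcal B$ is forward-invariant under the consensus flow and under the identity jumps, and the first bullet ($\tilde x(t_{K_1+1})=x(t_{K_1+1})\in\mathcal B$) gives $\tilde x(t)\in\mathcal B$ for all $t\ge t_{K_1+1}$. For the NE trajectory, the non-strict contraction $\|x(t_{k+1})-\eta\mathbf 1_N\|_\infty\le\|x(t_k)-\eta\mathbf 1_N\|_\infty$ established inside the proof of Theorem~1 (which applies since $\rho_{\max}>\lambda_1+\lambda_2$) keeps $x(t_k)\in\mathcal B$ for $k\ge K_1+1$; moreover a branch-by-branch inspection of the NE jump map of Lemma~\ref{prop:OSNE} yields the componentwise two-sided bound $\eta-\delta\le x_n(t_k^+)\le\eta+\delta$, the point being that each non-identity branch activates only for values of $\rho_n(k)$ close to $\lambda_1+\lambda_2$, so even the agents whose post-jump opinion departs from $\eta$ remain inside $\mathcal B$.

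Next, at each $k>K_1$ I would isolate the ``peak'' agent $m=m(k)$, $\rho_m(k)=\max_n\rho_n(k)\ge\rho_{\max}$. The distance from $\eta$ to the boundary of $X^\dagger_m$ is $\min\{\eta,1-\eta\}\big(\rho_m(k)/(\lambda_1+\lambda_2)-1\big)$, which by the second bullet exceeds $\delta$, so $x_m(t_k)\in X^\dagger_m$; hence, by \eqref{eq:etastable}, the NE jump sends agent $m$ exactly to $\eta$, and, using $x_m(t_k)\in\mathcal B$ and $\eta=\lambda_2/(\lambda_1+\lambda_2)$, the equilibrium actions $a_{1,m}^\star(k)=\eta\rho_m(k)/(\lambda_1+\lambda_2)-x_m(t_k)$ and $a_{2,m}^\star(k)=(1-\eta)\rho_m(k)/(\lambda_1+\lambda_2)-(1-x_m(t_k))$ satisfy $\lambda_i\mathbf 1_N^\top a_i^\star(k)\ge\lambda_i a_{i,m}^\star(k)\ge\tfrac{\lambda_1\lambda_2}{\lambda_1+\lambda_2}\big(\tfrac{\rho_{\max}}{\lambda_1+\lambda_2}-1\big)-\lambda_i\delta$. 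Combining this with $\tilde x(t_k),x(t_k^+)\in\mathcal B$ gives, for Marketer~$1$,
\[
u_1^{\mathrm{CS}}(k)-u_1^\star(k)=\rho(k)^\top\big(\tilde x(t_k)-x(t_k^+)\big)+\lambda_1\mathbf 1_N^\top a_1^\star(k)\ \ge\ -2\delta\,\rho(k)^\top\mathbf 1_N+\lambda_1 a_{1,m}^\star(k),
\]
and substituting the lower bound on $\lambda_1 a_{1,m}^\star(k)$ and rearranging produces a smallness condition on $\delta$ of the type appearing in the third bullet (the computation with the indices $1$ and $2$ interchanged handles Marketer~$2$). Summing over $k=K_1+1,\dots,K$ and dividing by $K$ then yields \eqref{eq:sust}.

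I expect the last step to be the crux: the market-share fluctuation term $-2\delta\,\rho(k)^\top\mathbf 1_N$ and the cost-saving term $\lambda_i\mathbf 1_N^\top a_i^\star(k)$ are both of order $\delta$, so the per-stage inequality closes only thanks to the precise constant in the third bullet. Making it rigorous requires (i) the sharp two-sided control $\eta-\delta\le x_n(t_k^+)\le\eta+\delta$ against every branch of Lemma~\ref{prop:OSNE} — in particular for agents with $\rho_n(k)\le\lambda_1+\lambda_2$, which may still advertise at the NE and whose post-jump opinions must not push the opponent's market share outside $\mathcal B$ — and (ii) sharpening the market-share term beyond the crude $2\delta\,\rho(k)^\top\mathbf 1_N$ (e.g. by using that agent $m$ lands exactly at $\eta$) so that the exact dependence on $\lambda_1,\lambda_2$ and $\rho(k)^\top\mathbf 1_N$ reproduces the stated third-bullet constant rather than a strictly stronger requirement. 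A final routine check is that $\mathcal B\subset(0,1)^N$, which follows from the second (hence also the third) bullet.
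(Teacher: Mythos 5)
Your proposal is correct in structure and follows essentially the same route as the paper: the same reduction of \eqref{eq:sust} to the phase-two stage differences (both strategies coincide up to stage $K_1$), the same confinement of both trajectories to the $\delta$-cube around $\eta\mathbf{1}_N$ via the contraction from Theorem 1 and the row-stochastic consensus flow, the same peak-agent step ($x_m(t_k)\in X^\dagger_m$ thanks to the second bullet, with $\rho_m(k)\ge\rho_{\max}$), and the same per-stage balancing of the market-share loss $-2\delta\,\rho(k)^\top\mathbf{1}_N$ against the advertising cost saved. The one point you flag as unresolved --- recovering the exact constant in the third bullet --- is precisely where the paper is loose rather than where you are: the paper lower-bounds the peak agent's action by $\frac{\rho_{\max}\lambda_i}{(\lambda_1+\lambda_2)^2}-\delta$ (in effect dropping the $\eta$ in $x_m(t_k)\le\eta+\delta$ and writing $\lambda_i$ where the NE formula \eqref{eq:OSNE1} has $\lambda_{-i}$), and the stated bullet is then nothing but the rearrangement of $\delta\le\lambda_i\bigl(\tfrac{\rho_{\max}\lambda_i}{(\lambda_1+\lambda_2)^2}-\delta\bigr)/\bigl(2\rho(k)^\top\mathbf{1}_N\bigr)$. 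Your sharper bound $\lambda_i a_{i,m}^\star\ge\tfrac{\lambda_1\lambda_2}{\lambda_1+\lambda_2}\bigl(\tfrac{\rho_{\max}}{\lambda_1+\lambda_2}-1\bigr)-\lambda_i\delta$ is the correct one and yields an analogous but not identical smallness condition on $\delta$; so the mismatch you anticipated traces to a slip in the paper's own final step, not to a missing idea in your argument.
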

\begin{proof}
We can rewrite the condition \eqref{eq:sust} in the following manner. If $\exists K_1 \in \{0,1,\dots,K-1\}$ such that 

\begin{equation}
\begin{array}{r}
\sum_{k=1}^{K_1} u_i(\tilde{x}(t_k),a_1^\star(k),a_2^\star(k)) + \sum_{k=K_1+1}^{K} u_i(\tilde{x}(t_k),\mathbf{0},\mathbf{0}) \\
\geq \sum_{k=1}^{K} u_i(x(t_k),a_1^\star(k),a_2^\star(k))
\end{array} \label{eq:feas}
\end{equation}

for all $i \in \{1,2\}$, then the CS is sustainable. 

Since both policies play the NE for the first $K_1$ stages, the utility difference only arises from the remaining stages and can be calculated as

\begin{equation}\begin{array}{l}
\sum_{k=K_1+1}^{K} u_1(\tilde{x}(t_k),\mathbf{0},\mathbf{0}) - u_i(x(t_k),a_1^\star(k),a_2^\star(k))\\
= \sum_{k=K_1+1}^{K} \rho(k) (\tilde{x}(t_k) - x(t_k)) + \lambda_1 a_1^\star(k)^T \mathbf{1}_N .
\end{array} \label{eq:utildiff1}
\end{equation}

Note that until $t_{K_1+1}$ both use the same actions and so we have $\tilde{x}(t_{K_1+1})= x(t_{K_1+1})$. Following which, we have $\tilde{x}(t_k)= \exp(-L (t_k-t_{K_1+1})$. Due the structure of $L$, we have 

$$||\tilde{x}(t_{k+1})- x^* \mathbf{1}_N||_\infty < ||\tilde{x}(t_k)- x^* \mathbf{1}_N||_\infty$$ for all $k \in \{K_1+1,\dots,K-1\}$. Since $x(t_k) \in [x^*-\delta,x^*+\delta]$, each component of $\tilde{x}(t_k)$ is lower bounded by $x^* - \delta$.

For the dynamics of $x(t_k)$, we have the condition that $\max_{n \in \mathcal{V}} \{|x_n(t_{k+1}) - x^*|\} \leq \max_{n \in \mathcal{V}} \{|x_n(t_{k}) - x^*|\}$ for all $k \in \{K_1+1,\dots,K-1\}$. This implies that each component of $x(t_k)$ is upper bounded by $x^*+\delta$. Therefore we can lower bound the term $ \sum_{k=K_1+1}^{K} \rho (\tilde{x}(t_k) - x(t_k))$ in \eqref{eq:utildiff1} with $2\delta\sum_{k=K_1+1}^{K} \rho(k)^T \mathbf{1}_N  $ and we have

\begin{equation}\begin{array}{l}
\sum_{k=K_1+1}^{K} u_1(\tilde{x}(t_k),\mathbf{0},\mathbf{0}) - u_i(x(t_k),a_1^\star(k),a_2^\star(k))\\
\geq  -2 \delta \sum_{k=K_1+1}^{K} \rho(k)^T \mathbf{1}_N   + \lambda_1 a_1^\star(k)^T \mathbf{1}_N. 
\end{array} \label{eq:utildiff2}
\end{equation}

If this value is greater than $0$, the cooperation plan is feasible by definition. While the actions associated to the other agents may be $0$, we always have $\rho_{\max} > \lambda_1+\lambda_2$. The action at any stage $k >K_1$ is non-zero for at least one agent $m$ with $\rho_m(k) \geq \rho_{\max}$ and is given by $a_{i,m}^\star(k) \geq \frac{\rho_m \lambda_{-i} \rho_{\max}}{(\lambda_1+\lambda_2)^2} - x_m(t_k)$. From Theorem 1, we know that $\|x_n(t)-\eta\|_\infty$ is strictly decreasing. Therefore if $x_n(T_{K_1})-\eta \leq \delta$ for all $n$, then $x_n(T_{K_1})-\eta \leq \delta$ for all $n,k>K_1$. Similar arguments can be used for $U_2$. Therefore, we have the condition of sustainability to be satisfied if

 \begin{equation}
\delta \leq \lambda_i \frac{ \frac{\rho_{\max} \lambda_i}{(\lambda_1+\lambda_2)^2} - \delta  }{2 \rho(k)^T \mathbf{1}_N}     ,
\end{equation}

since we have at least one agent $m$ with $a_{i,m}^\star$ lower bounded by $\frac{\rho_{\max} \lambda_i}{(\lambda_1+\lambda_2)^2} - \delta$. Then $x_m(t_k) \in X_m^\dagger$ and $\rho_m(k)\geq  \rho_{\max}$. 
\end{proof}

Next, we provide a numerical example which illustrates the sustainability of the proposed CS and allow us to assess the benefits of coopetition in the long-term.

%555555===========================
\section{Numerical performance analysis}
\label{sec:num}

To conduct a good comparison analysis, we choose values for the key parameters that are typical and quite similar to  \cite{varma2018marketing}.  For the costs of advertising, we assume that: $\lambda_1=1$ and $\lambda_2=0.5$. We consider a cascading graph structure where the a sub-graph structure of $5$ agents is repeated. The set of edges defining the sub-graph are given by $\mathcal{E}=\{(1,5),(2,1),(2,3),(3,1),(3,5),(4,1),(5,1),(5,2)\}$ with the connection weight fixed at $1$ when the edge exists. Moreover, we connect the repeating blocks of 5 agents in the following manner: agent $n$ is connected to $n+5$ (when $n<N-5$) with $\mathbf{L}_{n,n+5}=-1$ and $\mathbf{L}_{n+5,n}=-4$, i.e., the preceding blocks are more influential. The initial opinions are taken to be $x_n(0)=0.4+ \frac{n}{2N}$ and we calculate $\rho=\mathbf{1}_N^{\top} \exp(-\mathbf{L})$ where $\mathbf{L}$ is the Laplacian of the resulting graph. We also consider $T_k=1$, $K=5$ and $t_k=k$.

Fig.~\ref{simfig1} represents the evolution of the opinions when the number of agents $N=50$ and both marketers implement the one-shot NE for all campaign stages. We plot the opinions of agents $n=1$, $n=15$ and $n=50$ to show the types of behavior observable. The agent $n=1$ has a high influential power and is therefore controlled by both marketers while the agent $n=50$ is uncontrolled and slowly converges to $\eta$ by following his neighbors. The agent $n=15$ is controlled only when its opinion is far from $\eta$. This figure clearly shows one important result which is missing in the analysis conducted in  \cite{varma2018marketing}. Despite the presence of the zero-sum component in the stage game that creates tension in the network, the state of the network stabilizes to a given value which can be predicted from the theoretical analysis. Here, this value corresponds to $\eta$ and equals $\frac{1}{3}$. We observe that $\|x(t_k)-\eta \|_{\infty}<0.01$ by $k \geq 5$.

\begin{figure}[th!]
\begin{center}
\includegraphics[width=8cm,trim={1cm 0 0 1cm},clip]{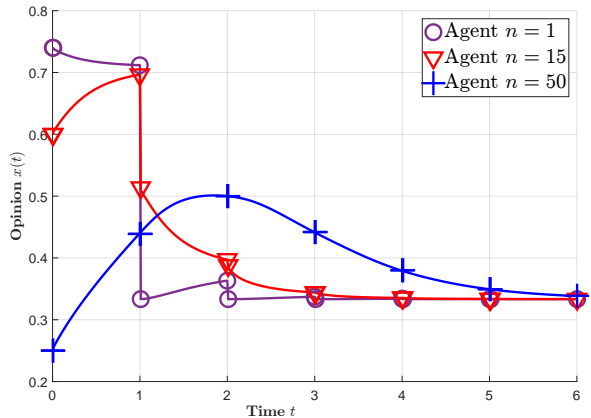}
\caption{Although the marketing game has a zero-sum component which creates tension in the network, the network reaches a consensus after sufficient stages.}\label{simfig1}
\end{center}\end{figure}

Fig.~\ref{simfig3}, represents the long-term utilities for the two marketers (with the parameters as before) as a function of $K_1$ (namely, the number of stages of the first phase of the proposed coopetition plan). It is seen that for any $1 \leq K_1 <5$, both marketers obtain a better long-term utility by stopping their marketing after $K_1$ campaigns. One of the virtues of this observation (that illustrates Theorem 1) is to show the potential of designing long-term marketing strategies and thus using a dynamic game formulation instead of exploiting a static game model as in \cite{varma2018marketing}. In Table 1, we compare the stage utilities by playing the proposed strategy in comparison to the one-shot NE after convergence to $\eta$. 

\begin{figure}[th!]
\begin{center}
\includegraphics[width=8cm, trim={1cm 0 0 1cm},clip]{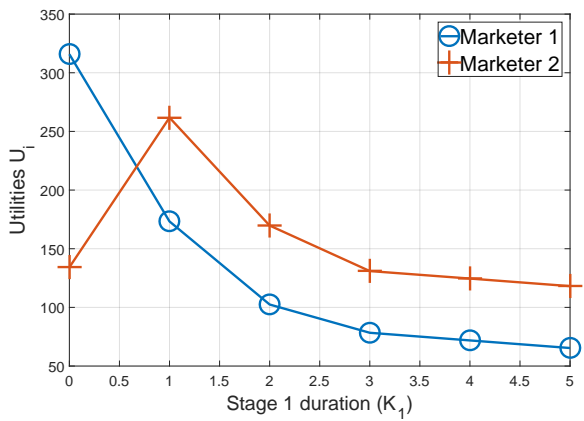}
\caption{The proposed coopetition strategy profile is seen to be sustainable for $K_1\geq 1$ that is, it Pareto-dominates the strategy profile of \cite{varma2018marketing}. }\label{simfig3}
\end{center}\end{figure}
 
\vspace{3mm}

\begin{table}[h!]%\label{tableau1} 
\vspace{3mm}
 \centering
\begin{tabular}{|c|c|c|c|c|}
\hline
   $N$    &  50   & 100    & 200\\
  \hline
Proposed marketing strategy ($i=1$)  & 17 & 33 & 67\\
\hline
Proposed marketing strategy ($i=2$)  & 34 & 66 & 132 \\
\hline
Strategy of \cite{varma2018marketing} ($i=1$) & 13 &28& 58 \\
\hline
Strategy of \cite{varma2018marketing} ($i=2$)  & 30 & 61 & 124 \\
\hline
Stages required for convergence to $\eta$ & 5 & 6 & 6 \\
\hline
\end{tabular}\vspace{2mm}
\caption{Stage utilities with proposed marketing strategy compared to strategy in \cite{varma2018marketing} after practical convergence.}
\end{table}

%====================
\section{Conclusion}
\label{sec:con}
In this paper, we study a game model which characterizes the repeated competition between firms trying to capture a market share by advertising over social media. The consumers that interact over the social network are therefore not only under the influence of the other consumers of the network but also of the external marketers who influence them through campaigns. This leads to a hybrid dynamics of consumers' opinions. Exploiting the key results in \cite{varma2018marketing}, we propose a coopetition marketing strategy which combines the one-shot Nash equilibrium actions and no advertising. Under reasonable sufficient conditions, it is proved that the proposed coopetition strategy profile Pareto-dominates the solution of \cite{varma2018marketing}. Numerical examples illustrate the theoretical results.

\bibliography{RG19Final}

\end{document}